\pgfplotsset{compat=1.18} 
\pgfplotsset{
    colormap={redblue}{
        rgb255(0cm)=(128,0,0);
        rgb255(1cm)=(255,0,0);
        rgb255(2cm)=(255,255,0);
        rgb255(3cm)=(100,255,0);
        rgb255(4cm)=(0,255,255);
        rgb255(5cm)=(0,0,180);
    }
}
\pgfplotsset{
  colormap={redblue}{
    rgb255(0cm)=(215, 48, 39);
    rgb255(1cm)=(244, 109, 67);
    rgb255(2cm)=(253, 174, 97);
    rgb255(3cm)=(254, 224, 144);
    rgb255(4cm)=(224, 243, 248);
    rgb255(5cm)=(171, 217, 233);
    rgb255(6cm)=(116, 173, 209);
    rgb255(7cm)=(69, 117, 180);
  }
}
\def\bblambda{\boldsymbol{\lambda}}
\newcommand*{\fn}[2]{\ensuremath{#1\left(#2\right)}}
\newcommand*{\fnXt}{\fn{\mv X}{t}}
\title{\vspace{6mm}\Large \bf Constrained Learning for Decentralized Multi-Objective Coverage Control}
\author{Juan Cervi\~no$^*$, Saurav Agarwal$^*$, Vijay Kumar, and Alejandro Ribeiro%
	\thanks{$^*$Equal contributions.
    The authors are with the University of Pennsylvania}%
  \thanks{Juan Cervi\~no has moved to the Massachusetts Institute of Technology.}%
  \thanks{E-mail:{\tt\{jcervino,sauravag,kumar,aribeiro\}@upenn.edu}.}%
\thanks{Source code: \url{https://github.com/KumarRobotics/CoverageControl}.}%
\thanks{Supported by the grant ARL DCIST CRA W911NF-17-2-0181.}%
}%
\begin{document}
\bstctlcite{IEEEexample:BSTcontrol}
\maketitle
\begin{abstract}
The multi-objective coverage control problem requires a robot swarm to collaboratively provide sensor coverage to multiple heterogeneous importance density fields~(IDFs) simultaneously.
We pose this as an optimization problem with constraints and study two different formulations:
(1)~Fair coverage, where we minimize the maximum coverage cost for any field, promoting equitable resource distribution among all fields; and
(2)~Constrained coverage, where each field must be covered below a certain cost threshold, ensuring that critical areas receive adequate coverage according to predefined importance levels.
We study the decentralized setting where robots have limited communication and local sensing capabilities, making the system more realistic, scalable, and robust.
Given the complexity, we propose a novel decentralized constrained learning approach that combines primal-dual optimization with a Learnable Perception-Action-Communication (LPAC) neural network architecture.
We show that the Lagrangian of the dual problem can be reformulated as a linear combination of the IDFs, enabling the LPAC policy to serve as a primal solver.
We empirically demonstrate that the proposed method
\emph{(i)}~significantly outperforms state-of-the-art decentralized controllers by 30\% on average in terms of coverage cost,
\emph{(ii)}~transfers well to larger environments with more robots, and
\emph{(iii)}~is scalable in the number of IDFs and robots in the swarm.
\end{abstract}
\section{Introduction}%
\label{sc:introduction}%

In a multi-objective coverage control problem, an environment is characterized by \emph{importance density fields} (IDFs) representing the relative significance of different regions, each capturing a distinct environmental aspect from various information sources.
Depending on the requirements, one may use a \emph{fair coverage} strategy for equitable resource distribution across all fields or a \emph{constrained coverage} strategy to satisfy specific constraints associated with individual fields.

\textbf{Fair coverage} minimizes the maximum coverage cost for any field, promoting equal resource allocation.
For instance, in a flooding scenario, a swarm of robots is deployed to provide sensor coverage over a large urban area.
While the central city region is critical, the outskirts also need monitoring.
A single importance field might cause robots to cluster in the city, neglecting other areas.
A fair coverage strategy ensures robots are distributed across the entire region, providing adequate coverage to all areas, which is crucial in humanitarian applications where resources must be fairly allocated~\cite{malenciafair}.
Figure~\ref{fig:main_env} shows a fair coverage solution.

\begin{figure}[htbp]
  \centering
  \includegraphics[width=1.00\columnwidth]{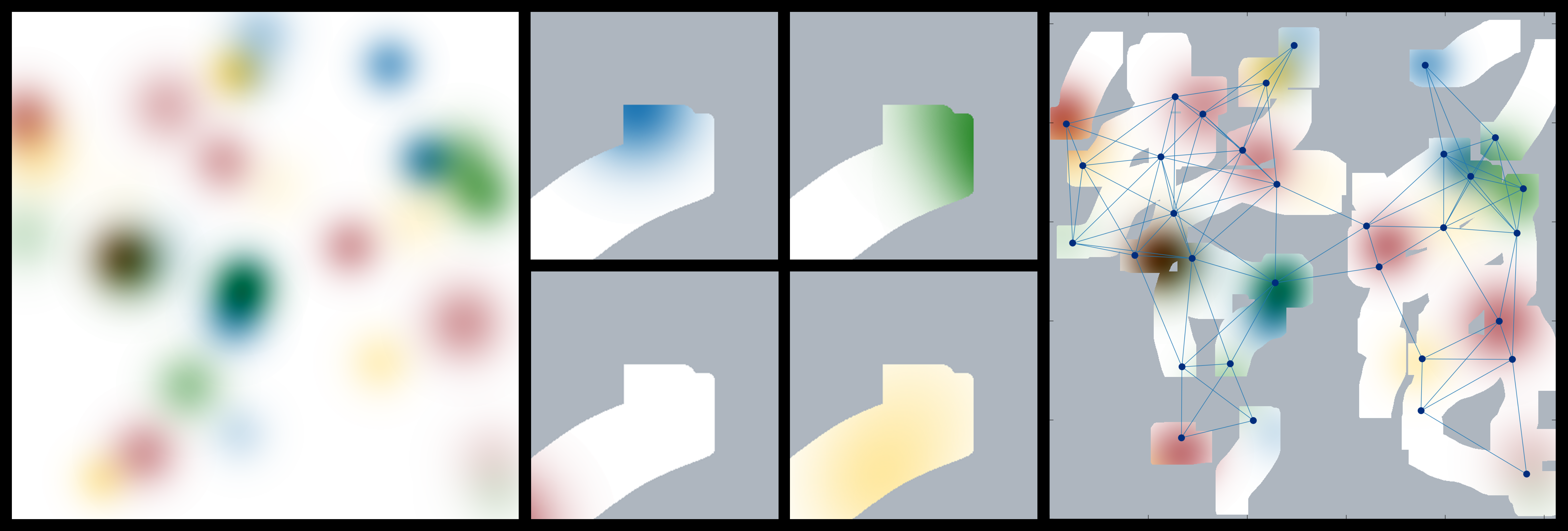}
  \label{fig:env}
  \caption{Multi-objective coverage control on an environment with four importance density fields (IDFs), shown in different colors.
    Robots make localized observations based on their limited sensing range (middle).
 The right figure shows the cumulative observations of all robots, their positions, and the communication graph. The proposed primal-dual algorithm performs dual updates, which are used to re-weight the IDFs, and the LPAC policy computes the velocity actions for the robots.}
  \label{fig:main_env}
\end{figure}

\textbf{Constrained coverage} requires each field to be covered below a certain cost threshold, ensuring critical areas receive coverage according to predefined levels.
In wildfire monitoring, defining a single field for all relevant information is challenging.
Instead, multiple importance fields---such as vegetation density, proximity to habitation, and historical data---provide a comprehensive environmental view.
Setting a cost threshold for each field ensures minimum information acquisition for each aspect appropriately.

To address these optimization problems, we develop a novel decentralized constrained learning approach that combines primal-dual optimization with a \emph{learnable perception-action-communication} (LPAC) loops~\cite{agarwal2024lpac}.
Specifically, we employ a primal-dual algorithm utilizing the Lagrangian dual, which periodically updates the dual variables, and the LPAC policy serves as the primal solver.
The LPAC architecture uses convolutional neural networks (CNNs) to extract features from the robots' local perceptions, graph neural networks (GNNs) to model inter-robot communication, and a shallow multilayer perceptron (MLP) to output robot velocity actions.
The GNN is the primary collaborative component in LPAC as it decides \emph{what} information to communicate and \emph{how} to use received information with its own perception.
This design enables robots to make informed decisions based on local information and limited communication, ensuring scalability and efficiency in decentralized settings.

The primary \textbf{contributions} of this paper are as follows:\\
\emph{(i)}~A novel approach that combines primal-dual optimization with the LPAC architecture for decentralized fair and constrained multi-objective coverage problems with limited communication and sensing robot capabilities.\\
\emph{(ii)}~The Lagrangian of the dual problem is reformulated as a linear combination of IDFs, allowing the multi-objective coverage control problem to be recast as a dynamically re-weighted single-objective problem.\\
\emph{(iii)}~Extensive empirical studies demonstrate the effectiveness of the proposed method compared to existing centroidal Voronoi tessellation methods.
Our approach achieves an average improvement of {$30\%$} in coverage cost, with peak improvements of up to {100\%}.\\
\emph{(iv)}~We establish the scalability and transferability of the proposed method in terms of (a)~the number of robots, (b)~the size of the environment, and (c)~the number of IDFs.

An interesting feature of our approach is that the LPAC policy is trained on instances of a single-objective coverage control problem and then applied to the multi-objective problem without further training.
Given the promising results for the coverage control problem, our approach can potentially provide a robust framework applicable to other multi-robot coordination tasks with constraints.

\subsection{Related Work}%
\label{sc:related_work}%
\textbf{Coverage Control} is widely studied in robotics with applications in mobile networking~\cite{hexsel11}, surveillance~\cite{doitsidis2012optimal}, and target tracking~\cite{pimenta2010simultaneous}.
The foundations of decentralized control algorithms for robots with limited sensing and communication capabilities were given by Cort\'{e}s et al.~\cite{Cortes05}.
The algorithms iteratively perform \emph{(i)}~a centroidal Voronoi tessellation (CVT)~\cite{Du99,Edelsbrunner85} of the environment and assigns each robot to their respective Voronoi cell, and \emph{(ii)}~computes the centroid of each cell and moves the robot towards it.
The algorithms are also referred to as variants of the Lloyd's algorithm~\cite{Lloyd82} from quantization theory.
However, there has been a limited study on coverage control with multiple IDFs.

Socially fair coverage control with multiple IDFs arising from population groups was studied by Malencia et al.~\cite{malenciafair}.
By using \textrm{LogSumExp} approximation, a relaxation of the min-max problem was proposed, which led to a control law.

Our approach tackles multi-objective coverage problems in a principled way through duality, recasting them as a single objective coverage control and therefore allowing any coverage control policy to generalize to the multi-objective case without the need for retraining or redesigning controllers.

\textbf{Graph Neural Networks} are cascaded architectures composed of layers of graph convolutions and pointwise non-linearities \cite{gama2019convolutional}.
GNNs have shown impressive results in problems like weather prediction \cite{lam2022graphcast}, recommendation systems \cite{he2020lightgcn}, and physics \cite{sanchez2020learning}.
Given their graph convolutional structure, GNNs have been used for decentralized problems in various robotics applications such as multi-robot active information acquisition \cite{tzes2023graph}, coverage control \cite{gosrich2022coverage,agarwal2024lpac}, and path planning \cite{tolstaya2020learning,li2020graph,li2021message}.
GNNs are well-suited for robotic applications due to their stability to perturbations on the graph \cite{wang2024generalization,gama2020stability}, transferability to larger-scale graphs \cite{ruiz2020graphon}, and generalization to unseen graphs \cite{wang2024manifold}.

\textbf{Constrained learning} is a machine learning technique that tackles problems in which more than one objective needs to be satisfied jointly \cite{chamon2022constrained,chamon2020probably}.
Notable applications include federated learning \cite{shen2021agnostic}, passing through robustness \cite{robey2021adversarial} and smoothness \cite{cervino2023learning}, and active learning \cite{elenter2022lagrangian}.
Constrained learning has found applications in GNNs for improving their stability \cite{cervino2022training} and robustness \cite{arghal2022robust}.
In robotics, constrained learning has been applied to safe reinforcement learning \cite{paternain2022safe,castellano2022reinforcement,rozada2024deterministic} and legged robot locomotion \cite{kim2024not}.

\section{Decentralized Multi-Objective Coverage}%
\label{sc:problem_formulation}%
The coverage problem is defined on a $d$-dimensional environment $\mc W\subset\mathbb R^2$ with a homogeneous set of $N$ robots $\mc V=\{1,\dots,N\}$.
The position of the robot at time $t$ is denoted by $\mv p_i(t)\in \mc W$.
Robots take control actions $\mv u_i(t)\in \mathbb R^2$ to move in the environment and follow a single integrator dynamics every time step $\Delta t$:
\begin{align}
  \mv p_i(t+\Delta t)=\mv p_i(t)+\Delta t \mv u_i(t).
\end{align}

In a decentralized setting, each agent makes observations of its local environment and communicates with nearby robots to exchange information.
We define a \textit{communication graph} graph $\mc G=(\mc V,\mc E)$, where $\mc V$ is the set of robots and $\mc E$ is the set of edges that represents a communication link between the robots.
The communication link $(i,j)\in\mc E$ exists if the distance between the robots $i$ and $j$ is less than the communication radius $r_c$, i.e. $\| \mv p_i- \mv p_j\|\leq r_c$.
We assume the communication occurs both ways and, therefore, the graph is undirected.
A robot may communicate an abstract representation of its state with its neighbors, and the communication graph is used to determine the neighborhood of each robot.
Decentralized policies are executed by each robot independently, and do not require a centralized server thus alleviating the communication and scalability issues. 

In coverage control, the objective is to provide sensor coverage based on the relative importance of each point in the environment.
We define an \emph{importance density field} (IDF) $\Phi:\mc W\to\mathbb R^+$, which represents the importance of each point in the environment.
The cost function $\mc J$ is thus defined as:
\begin{equation}
  \mc J(\mv X(t))=\int_{\mv q\in \mc W}\min_{i\in \mc V} f(\|\mv p_i-\mv q\|)\Phi(\mv q)d(\mv q),\label{eqn:coverage_cost}
\end{equation}
where $f$ is a non-decreasing function, and a common choice is $f(x)=x^2$.
The state of the system is given by $\mv X(t)=\{\mv p_i(t)\}_{i\in \mc V}$.
Assuming that two robots cannot occupy the same point, we can assign each robot a distinct portion of the environment to cover based on the Voronoi partitions \cite{de2000computational}, and the cost function can be rewritten as:
\begin{equation}
  \mc J\left(\mv X(t)\right) = \sum_{i=1}^N \int_{\mv q \in P_i} f(\|\mv p_i(t) - \mv q\|) \Phi(\mv q)\,d\mv q,
  \label{eq:coverage_voronoi}
\end{equation}
where $P_i$ is the Voronoi partition of robot $i$.

\subsection{Multi-Objective Coverage}
We formulate two multi-objective coverage problems with $M$ independent IDFs $\Phi_m:\mc W\to\mathbb R^+$ for $m=1,\dots,M$:

\noindent\emph{(i)}~Fair coverage minimizes the maximum coverage cost of any IDF.
\emph{(ii)}~Constrained coverage minimizes the cost function of one IDF while ensuring that the cost functions of the other IDFs are below a certain threshold.

\textbf{Fair Coverage:}
Given a team of $N$ robots operating in an environment with $M$ IDFs, the \emph{fair multi-objective coverage problem} (FMCP) is defined as:
\begin{align}
  &P_f=  \min_{\mv X(t),\rho} \quad \rho\label{eqn:epi_multi_objective_coverage}\tag{FMCP}\\
  \text{subject to} \quad &\mc J_m (\mv X(t))\leq \rho \quad \forall\, m=1,\dots,M \nonumber
\end{align}
In the FMCP, all coverage functions need to be below $\rho$.
This is typically called an epigraph problem \cite{boyd2004convex}.

\textbf{Constrained Coverage:}
Given a team of $N$ robots operating in an environment with $M$ IDFs, the \emph{constrained multi-robot coverage problem} (CMCP) is defined as:
\begin{align}
  &P_c= \min_{\mv X(t)} \quad   \mc J_0 (\mv X(t))\label{eqn:multi_objective_coverage}\tag{CMCP}\\
  \text{subject to} \quad & \mc J_m (\mv X(t))\leq \alpha_m \quad \forall\, m=1,\dots,M. \nonumber
\end{align}
There are $M+1$ IDFs that need to be covered.
The $0$-th IDF needs to be minimized, while the other $M$ of them need to be covered up to their corresponding ~$\alpha_m$. 
Note that the $0$-th objective is not necessary, and the formulation can be used to obtain feasible solutions that satisfy the constraints.

\section{Approach: Primal-Dual and LPAC Loops}%
\label{sc:duality}%
To solve the multi-objective coverage control problem, we resort to the dual domain.
Denote the dual variables by $\lambda_i\geq 0, i \in \{0,\ldots,M\},\bblambda=[\lambda_1,\dots,\lambda_M]$, the state $\fnXt$ by $\mv X_t$, and define the constrained $\mc L_c$ and fair $\mc L_f$ Lagrangians as:
\begin{align}
  \fn{\mc L_f}{\mv X_t, \bblambda} &\coloneq  \sum_{m=1}^M \lambda_m \fn{\mc J_m}{\mv X_t},\,\text{s.t.}\sum_{i=1}^M \lambda_i = 1,\label{eqn:lagrangian_fair}\\
  \fn{\mc L_c}{\mv X_t, \bblambda} &\coloneq \fn{\mc J_0}{\mv X_t} + \sum_{m=1}^M \fn{\lambda_m}{\fn{\mc J_m}{\mv X_t} - \alpha_m}.\label{eqn:lagrangian_constrained}
\end{align}
Note that the variable $\rho$ disappears from the Lagrangian $\mc L_f$, as it is a property of the solution (see \cite[Appendix A]{cervino2023learning}). 

Denoting a general Lagrangian by $\mc L_*, *\in\{f,c\}$, the problem $D_*$ and dual function $d_*(\cdot)$ are 
\begin{equation}
  D_*=\max_{\bblambda}\big( d_*(\bblambda) \coloneq \min_{\mv X_t} \mc L_*(\mv X_t, \bblambda)\big).\label{eqn:dual_problem}
\end{equation}
While the primal constrained optimization problems $P_f$ \eqref{eqn:epi_multi_objective_coverage} and $P_c$ \eqref{eqn:multi_objective_coverage} are computationally challenging, the dual counterparts $D_*$ are linear combinations of convex optimization problems that can be solved in practice.
We now show that the dual problem~\eqref{eqn:dual_problem} is equivalent to a coverage control problem on the linear combination of the IDFs.
%
\begin{proposition}\label{prop:IDF_lambda}
  Given a set of scalars $\lambda_m\geq 0,m=1,\dots, M$, the linear combination of coverage control problems $\sum_{m=1}^M\lambda_m\fn{\mc J_{\phi_m}}{\fnXt}$, is equivalent to a coverage control problem on the linear combinations of IDFs, i.e., $\fn{\mc J_{\phi_\lambda}}{\fnXt}$, with $\phi_\lambda(x)=\sum_{m=1}^M\lambda_m\phi_m(x)$.
\end{proposition}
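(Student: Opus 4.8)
The plan is to unwind the definition of the coverage cost in \eqref{eqn:coverage_cost} and exploit the fact that the pointwise ``closest-robot'' term $\min_{i\in\mc V} f(\|\mv p_i-\mv q\|)$ is a function of the robot configuration $\fnXt$ and the query point $\mv q$ only, and is in particular independent of which IDF is being integrated against. Concretely, I would start from $\lambda_m\fn{\mc J_{\phi_m}}{\fnXt}=\int_{\mv q\in\mc W}\lambda_m\big(\min_{i\in\mc V} f(\|\mv p_i-\mv q\|)\big)\phi_m(\mv q)\,d\mv q$, pull the constant $\lambda_m$ under the integral sign, and sum over $m=1,\dots,M$. Since the sum is finite and each integrand is nonnegative and integrable over the bounded domain $\mc W$, linearity of the integral allows interchanging $\sum_m$ with $\int_{\mc W}$, yielding $\sum_{m=1}^M\lambda_m\fn{\mc J_{\phi_m}}{\fnXt}=\int_{\mv q\in\mc W}\big(\min_{i\in\mc V} f(\|\mv p_i-\mv q\|)\big)\big(\sum_{m=1}^M\lambda_m\phi_m(\mv q)\big)\,d\mv q$. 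Identifying the last factor with $\phi_\lambda(\mv q)=\sum_{m=1}^M\lambda_m\phi_m(\mv q)$, the right-hand side is by definition $\fn{\mc J_{\phi_\lambda}}{\fnXt}$, which is the claimed identity.

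The second, easier, step is to confirm that $\fn{\mc J_{\phi_\lambda}}{\fnXt}$ is genuinely an instance of the coverage control problem, i.e., that $\phi_\lambda$ is a legitimate IDF. This is exactly where the hypothesis $\lambda_m\geq 0$ is used: a nonnegative linear combination of maps $\phi_m:\mc W\to\mathbb R^+$ is again a map $\mc W\to\mathbb R^+$, so $\phi_\lambda$ satisfies the same requirements as each $\phi_m$. I would also remark that the Voronoi reformulation \eqref{eq:coverage_voronoi} transfers for free: the partition $\{P_i\}$ is determined by $\fnXt$ alone, hence is common to every $\mc J_{\phi_m}$ and to $\mc J_{\phi_\lambda}$, so multiplying \eqref{eq:coverage_voronoi} for $\phi_m$ by $\lambda_m$ and summing gives the same conclusion cell by cell.

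I do not expect a real obstacle here. The only points deserving a word of care are the interchange of the finite sum with the integral (immediate, as the sum is finite and each term integrable) and, if one wants full rigor in moving between \eqref{eqn:coverage_cost} and \eqref{eq:coverage_voronoi} for $\phi_\lambda$, the fact that the set of points equidistant from two or more robots has Lebesgue measure zero, so the ``$\min$'' form and the ``sum over Voronoi cells'' form of the cost agree almost everywhere and therefore integrate to the same value. Both are standard, so the proof reduces to the one-line computation above together with the nonnegativity remark.
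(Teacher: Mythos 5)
Your argument is correct and is essentially identical to the paper's own proof: both pull the scalars $\lambda_m$ inside the integral, use linearity to exchange the finite sum with the integral against the common factor $\min_{i\in\mc V} f(\|\mv p_i-\mv q\|)$, and invoke $\lambda_m\geq 0$ to conclude that $\phi_\lambda$ is a valid IDF. Your additional remarks on the Voronoi form and measure-zero boundaries are fine but not needed beyond what the paper does.
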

\begin{proof}
  Let the distance function $\dist(\mv q, t)$ be defined as:
  \begin{equation*}
    \dist(\mv q, t) \coloneq \min_{i\in \mc V}\fn{f}{\|\mv p_i(t)-\mv q\|}.
  \end{equation*}
  By the definition of $\mc J$ in \eqref{eqn:coverage_cost} and the linearity of integration:
  \begin{equation*}
    \begin{split}
      \sum_{m=1}^M \lambda_m \fn{\mc J_m}{\fnXt}&=\sum_{m=1}^M \lambda_m\smashoperator{\int\limits_{\mv q\in\mc W}}  \dist(\mv q, t)\phi_m(\mv q)d\mv q\\
                                                &=\smashoperator{\int\limits_{\mv q\in\mc W}} \dist(\mv q, t) \sum_{m=1}^M \lambda_m \phi_m(\mv q)d\mv q.
    \end{split}
  \end{equation*}
Note that $\sum_{m=1}^M\lambda_m\phi_m(x)$ is a valid IDF as $\lambda\geq 0$. 
By defining $\phi_\lambda(x)=\sum_{m=1}^M\lambda_m\phi_m(x)$, we obtain the result. 
\end{proof}
Proposition~\ref{prop:IDF_lambda} implies that a linear combination of coverage control problems is equivalent to a coverage control problem on the linear combinations of IDFs.
%
\begin{corollary}\label{cor:IDF_lambda}
  The dual function $d_*(\bblambda)$ (cf. \eqref{eqn:dual_problem}) associated with the Lagrangians $\fn{\mc L_f}{\mv X(t),\bblambda}$ of the fair coverage problem $P_f$ (FMCP) and $\mc L_c(\mv X(t), \bblambda)$ of the constrained coverage problem $P_c$ (CMCP) are equivalent to a single objective unconstrained coverage control problem 
with $\phi_{\bblambda}(\mv q)=\sum_{m=1}^M \lambda_m\phi_m(\mv q)$.
\end{corollary}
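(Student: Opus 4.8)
The plan is to derive this essentially as an immediate consequence of Proposition~\ref{prop:IDF_lambda}, handling the two Lagrangians separately and reconciling them at the end. For the \textbf{fair case}, I would start from the definition $d_f(\bblambda)=\min_{\mv X_t}\mc L_f(\mv X_t,\bblambda)=\min_{\mv X_t}\sum_{m=1}^M\lambda_m\mc J_m(\mv X_t)$, where the $\lambda_m$ satisfy $\lambda_m\ge 0$ and $\sum_{i=1}^M\lambda_i=1$. Since all coefficients are nonnegative, Proposition~\ref{prop:IDF_lambda} applies verbatim: the objective $\sum_{m=1}^M\lambda_m\mc J_m(\mv X_t)$ equals $\mc J_{\phi_{\bblambda}}(\mv X_t)$ with $\phi_{\bblambda}(\mv q)=\sum_{m=1}^M\lambda_m\phi_m(\mv q)$. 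Hence $d_f(\bblambda)=\min_{\mv X_t}\mc J_{\phi_{\bblambda}}(\mv X_t)$, which by inspection is a single-objective coverage control problem that is unconstrained in the state $\mv X_t$.

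For the \textbf{constrained case}, I would first split off the part of $\mc L_c$ that does not depend on the state: $\mc L_c(\mv X_t,\bblambda)=\big(\mc J_0(\mv X_t)+\sum_{m=1}^M\lambda_m\mc J_m(\mv X_t)\big)-\sum_{m=1}^M\lambda_m\alpha_m$. The trailing term is constant with respect to $\mv X_t$, so it passes through the inner minimization: $d_c(\bblambda)=\big(\min_{\mv X_t}[\mc J_0(\mv X_t)+\sum_{m=1}^M\lambda_m\mc J_m(\mv X_t)]\big)-\sum_{m=1}^M\lambda_m\alpha_m$. Setting $\lambda_0=1$, consistent with the implicit unit multiplier on $\mc J_0$, the bracketed term is again a nonnegative linear combination $\sum_{m=0}^M\lambda_m\mc J_m(\mv X_t)$, to which Proposition~\ref{prop:IDF_lambda} applies, yielding $\mc J_{\phi_{\bblambda}}(\mv X_t)$ with $\phi_{\bblambda}(\mv q)=\phi_0(\mv q)+\sum_{m=1}^M\lambda_m\phi_m(\mv q)$. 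Thus $d_c(\bblambda)$ equals a single-objective coverage cost plus a state-independent offset; since the offset does not change the minimizing configuration, the two problems are equivalent.

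The only real care needed — and this is where I would be deliberate rather than where any genuine difficulty lies — is bookkeeping. First, I must verify the hypotheses of Proposition~\ref{prop:IDF_lambda}: nonnegativity of all coefficients holds because the $\lambda_m$ are dual variables (and $\lambda_0=1>0$), and in the fair case the simplex constraint $\sum_i\lambda_i=1$ only restricts the feasible $\bblambda$ further without violating nonnegativity. Second, I would reconcile the index ranges so that a single symbol $\phi_{\bblambda}$ covers both settings: the fair Lagrangian carries no $\mc J_0$ term (equivalently $\lambda_0=0$), whereas the constrained one always includes it (equivalently $\lambda_0=1$), so in general $\phi_{\bblambda}=\sum_{m=0}^M\lambda_m\phi_m$ under the appropriate convention on $\lambda_0$ — I would state this convention explicitly so that the corollary's formula $\phi_{\bblambda}(\mv q)=\sum_{m=1}^M\lambda_m\phi_m(\mv q)$ is read in the fair setting and amended by the additive $\phi_0$ in the constrained setting. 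Finally, I would remark that ``equivalent'' here means equal up to a state-independent additive constant, which suffices since the minimizer over $\mv X_t$ is unchanged, and that the resulting object $\min_{\mv X_t}\mc J_{\phi_{\bblambda}}(\mv X_t)$ is literally the optimal value of the single-objective unconstrained coverage control problem with IDF $\phi_{\bblambda}$.
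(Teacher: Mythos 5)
Your proposal is correct and follows essentially the same route as the paper: apply Proposition~\ref{prop:IDF_lambda} to each Lagrangian, note that $\sum_{m=1}^M\lambda_m\alpha_m$ is a state-independent constant that passes through the inner minimization, and conclude that $\phi_{\bblambda}$ is a valid IDF since $\lambda_m\geq 0$. Your explicit bookkeeping of the $\phi_0$ term with the convention $\lambda_0=1$ in the constrained case is a small but welcome clarification that the paper's statement of $\phi_{\bblambda}=\sum_{m=1}^M\lambda_m\phi_m$ leaves implicit.
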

\begin{proof}
We can apply Proposition \ref{prop:IDF_lambda} to the Lagrangians $\mc L_f$ and $\mc L_c$.
By defining $\phi_{\bblambda}(\mv q)=\sum_{m=1}^M \lambda_m\phi_m(\mv q)$, and noting that this is a valid IDF given that $\lambda\geq 0$, we complete the proof.
For the constrained Lagrangian~$\mc L_c$,  the term $\sum_{m=1}^M\lambda_m\alpha_m$ is a constant and does not take part in the minimization.
\end{proof}
Corollary \ref{cor:IDF_lambda} is important because it allows us to simplify the multi-objective coverage control problem by recasting it as a single objective, thus utilizing a policy $\pi$ for a single coverage control. The case with constraints $\mc L_c$ is equivalent to the minimization problem. 
Under mild assumptions, and both for convex as well as non-convex losses, it can be shown that the constrained optimization problem $P_*$, and the dual counterpart $D_*$ are close \cite{chamon2020probably,chamon2022constrained,elenternear}.

\subsection{Primal-Dual Algorithm For Multi-Objective Coverage}
To obtain the optimal dual variables and solve the dual problem \ref{eqn:dual_problem}, we will resort to a primal-dual algorithm. There are two time frames, the primal and the dual time frame, the dual being slower than the primal, i.e., the value of $\lambda$ remains constant over a $T$ period, upon which it is modified. 

In the primal update, at time step $k$, given dual variables $\bblambda_k=[\lambda^0_k,\dots,\lambda^M_k]$, the objective is to find the action such that the objective $\mc J_{\bblambda_k}(\mv X(t))$ is minimized.
After $T$ steps, the dual variables get updates according to the following rules:
\begin{align}
\tilde \bblambda_{k+1} = [\lambda_{k}^i+\eta s^i(k)]_+ \text{ (dual update),}\label{eqn:dual_update}\\
\bblambda_{k+1} = \mc P(\tilde \bblambda_{k+1} )\text{ (dual projection),}\label{eqn:dual_projection}
\end{align}
where $s^i$ represents the slackness (or constraint satisfaction) of the $i$-th constraint, $[\cdot]_+=\max(0,\cdot)$ element-wise, and $\mc P$ is a projection.

For \eqref{eqn:multi_objective_coverage}, the slackness takes the form:
\begin{align}
s^i_c(k) = \frac{1}{T}\sum_{t=kT}^{(k+1)T}(\mc J_i\left(\mv X(t)\right)-\alpha_i,
\end{align} 
and there is no projection, i.e. $\tilde \bblambda = \bblambda$.

For \eqref{eqn:epi_multi_objective_coverage}, the dual slackness and the projection are:
\begin{align}
s^i_f(k) &= \frac{1}{T}\sum_{t=kT}^{(k+1)T}\mc J_i\left(\mv X(t)\right)\\
\bblambda &= \argmin_{\bblambda} ||\bblambda - \tilde\bblambda||_2\text{ s.t. } \sum_{m=1}^M \lambda_m = 1.
\end{align} 

This projection~\cite{duchi2008projection} is required to satisfy the constraint added to the Lagrangian in \eqref{eqn:lagrangian_fair}.
A succinct explanation of the primal-dual steps can be found in Algorithm~\ref{alg:primal_dual}.

\begin{algorithm}[t]
\caption{Primal Dual Multi-Objective Coverage}\label{alg:primal_dual}
\begin{algorithmic}
  \small
  \STATE \textbf{Input:} Policy $\pi$, initial dual variables $\bblambda$, dual period $T$, dual step size $\eta$, total number of dual steps $K$
  \FOR{$k=1$ to $K$}
  \FOR{$\bar{t} = 1$ to $T$}
  \STATE $t\leftarrow (k-1)T+\bar{t}$
  \STATE Compute action $\mv U(t)\leftarrow\pi(\mv X(t),\phi_{\bblambda})$
  \STATE Perform control actions $\mv U(t)$ using policy $\pi$
  \ENDFOR
  \STATE Update dual variables according to \eqref{eqn:dual_update}
  \STATE Project dual variables according to \eqref{eqn:dual_projection}
  \ENDFOR
\end{algorithmic}
\end{algorithm}

\subsection{Perception-Action-Communication Loops}
Intelligent collaborative and decentralized robotic systems require each robot to independently perceive the environment, communicate information with nearby neighbors, and take appropriate actions by using all the received and sensed information.
This leads to a Perception-Action-Communication (PAC) loop.
In a Learnable PAC (LPAC) architecture~\cite{agarwal2024lpac}, we have a neural network for each module: a Convolutional Neural Network (CNN) for perception, a Graph Neural Network (GNN) for communication, and a Multi-Layer Perceptron (MLP) for action.
We train an LPAC policy using imitation learning, where we use a clairvoyant CVT-based controller to generate target velocity actions for the robots.

\textbf{Environment setup}: Robots are initialized uniformly at random in a square environment with side length \SIm{1024} and a resolution of \SIm{1} per pixel.
For each importance density field (IDF), we generate a fixed number of 2D Gaussian distributions, where the mean is uniformly sampled in the environment, and the variance is sampled from $[40, 60]$ with a limit of $3$ standard deviations to avoid detection from afar.
Each such distribution is scaled by a random factor sampled from $[0.05, 1.00]$ to have a wide range of importance values.

As robots move, they maintain an \textit{importance} map for each IDF to store observations and a \textit{boundary} map to store perceived boundaries.
The observations are limited to a $64 \times 64$ window around the robot.
The robots follow a single integrator dynamics with a maximum speed of \SIvel{1}.

We now describe the three modules of the architecture.

The \textbf{perception} module comprises a CNN that takes as input a four-channel image representing the IDF, the boundary, and the relative coordinates of nearby robots.
A local region of $256 \times 256$ pixels is extracted from the importance maps and the boundary map, and they are scaled down to $32 \times 32$ pixels; this forms the first two channels of the input.
The relative coordinates of the nearby robots are scaled by the communication range and are added as the last two channels.

The CNN consists of three sequences, each comprising a convolutional layer followed by batch normalization~\cite{IoffeS15BatchNorm} and a leaky ReLU activation function.
Each convolutional layer uses a $3\times3$ kernel with a stride of one and zero padding, producing 32 output channels.
The output of the final sequence is flattened and passed through a linear layer with a leaky ReLU activation to generate a 32-dimensional feature vector, which is sent to the communication module.

The \textbf{communication} module receives the feature vector from the perception module and propagates it through the communication graph using a Graph Neural Network (GNN).
The GNN computes a message for each robot that needs to be communicated to other agents.
Furthermore, the GNN aggregates the received messages and combines them with the local feature vector to generate the final output, which is passed to the action module.

Our GNN architecture is a composition of $5$ \textit{graph convolution filters}~\cite{RuizGR21} with ReLU as the pointwise nonlinearity.
Each convolution is parameterized by $k=3$ hops, and each layer has $512$ hidden units.
The shift operator $S$ is the normalized adjacency matrix of the communication graph,
	$\mv S = \mv D^{-1/2} \mv A \mv D^{-1/2}$,
$\mv A$ is the adjacency matrix, and $\mv D$ is the diagonal degree matrix.
Each layer is given by:
\begin{equation}
	\mv Z_l = \sum_{k=0}^K (\mv S)^k \mv X_{l-1} \mv H_{lk}, \quad \mv X_l = \sigma(\mv Z_l).
\end{equation}
Here, $\mv X_l$ is the output of the $l$-th layer, $\mv H_{lk}$ is the weight matrix of the $k$-th filter in the $l$-th layer, and $\sigma$ is the ReLU activation function.

The \textbf{action} module is the last link in the chain and is in charge of generating the local velocity control action for the robot.
The MLP has two layers with 32 units each, and the final output of the MLP is processed by a linear layer to generate the $x$ and $y$ components of the velocity action.

\subsection{Imitation Learning}
We use imitation learning to train the LPAC architecture to mimic the behavior of the clairvoyant algorithm.
The training is performed using the Adam optimizer~\cite{KingBa15} in Python using PyTorch~\cite{PyTorch} and PyTorch Geometric~\cite{PyTorchGeometric}.
We use a batch size of 100 and train the network for 100 epochs, with a learning rate of $10^{-4}$ and a weight decay of $10^{-3}$.
We use the mean squared error (MSE) loss as the loss function, where the target is the output of the clairvoyant algorithm, and the prediction is the output of the LPAC architecture.
The model with the lowest validation loss is selected as the final model.

\begin{figure}[htbp]
  \centering
  \begin{tikzpicture}[trim axis right]
	\footnotesize
	\tikzset{
		mylabel/.style={pos=0.85, above, yshift=-2.5pt}
	}
	\pgfplotstableread[col sep=comma]{./plots/data/baseline_fair_coverage.csv}{\costdata}
	\begin{axis}[
		width=1.05\columnwidth,
		height=0.682\columnwidth,
		xlabel={Time},
    x unit={s},
		ylabel={Normalized Coverage Cost},
		xlabel style={font=\footnotesize},
    ylabel style={font=\footnotesize, inner sep=-0.4pt},
		legend style={font=\scriptsize, at={(0.50,1.15)},anchor=north},
		legend columns=-1,
    grid=both,
		xmin=-5,
		xmax=255,
		ymin=0.5,
		ymax=6.5,
		]
		\def\controller{clair}
		\addplot[mark=none, smooth,draw=mSteelGray, thick]  table[x=Step, y=\controller, col sep=comma] {\costdata} node[mylabel, yshift=+3.0pt] {\textcolor{mSteelGray!80}{\footnotesize Clairvoyant}};
		\addplot[name path=upper,draw=mSteelGray!70, dashed, thin] table[x=Step, y expr=\thisrow{\controller}+\thisrow{\controller_std}, col sep=comma] {\costdata};
		\addplot[name path=lower,draw=mSteelGray!70, dashed, thin] table[x=Step,y expr=\thisrow{\controller}-\thisrow{\controller_std}, col sep=comma] {\costdata};
		\addplot[fill=mSteelGray!10, fill opacity=0.5] fill between[of=upper and lower];

		\def\controller{centralized_malencia}
    \addplot[mark=none, smooth, draw=mDarkRed, thick]  table[x=Step, y=\controller, col sep=comma] {\costdata} node[mylabel, yshift=+1.0pt] {\textcolor{mDarkRed!80}{\footnotesize SFCC~\cite{malenciafair}}};
		\addplot[name path=upper,draw=mDarkRed!80, thin, dashed] table[x=Step, y expr=\thisrow{\controller}+\thisrow{\controller_std}, col sep=comma] {\costdata};
		\addplot[name path=lower,draw=mDarkRed!80, thin, dashed] table[x=Step,y expr=\thisrow{\controller}-\thisrow{\controller_std}, col sep=comma] {\costdata};
		\addplot[fill=mDarkRed!40, fill opacity=0.5] fill between[of=upper and lower];

		\def\controller{centralized}
		\addplot[mark=none, smooth,draw=mGreen, thick]  table[x=Step, y=\controller, col sep=comma] {\costdata} node[mylabel, yshift=-7.5pt] {\textcolor{mGreen}{\footnotesize Centralized CVT}};
		\addplot[name path=upper,draw=mGreen!70, thin, dashed] table[x=Step, y expr=\thisrow{\controller}+\thisrow{\controller_std}, col sep=comma] {\costdata};
		\addplot[name path=lower,draw=mGreen!70, thin, dashed] table[x=Step,y expr=\thisrow{\controller}-\thisrow{\controller_std}, col sep=comma] {\costdata};
		\addplot[fill=mGreen!40, fill opacity=0.8] fill between[of=upper and lower];

		\def\controller{LPAC}
		\addplot[mark=none, smooth,draw=mDarkBlue, thick]  table[x=Step, y=\controller, col sep=comma] {\costdata} node[mylabel, yshift=+3.0pt] {\textcolor{mDarkBlue!80}{\footnotesize LPAC}};
		\addplot[name path=upper,draw=mDarkBlue!40, dashed, thin] table[x=Step, y expr=\thisrow{\controller}+\thisrow{\controller_std}, col sep=comma] {\costdata};
		\addplot[name path=lower,draw=mDarkBlue!40, dashed, thin] table[x=Step,y expr=\thisrow{\controller}-\thisrow{\controller_std}, col sep=comma] {\costdata};
		\addplot[fill=mDarkBlue!40, fill opacity=0.5] fill between[of=upper and lower];

	\end{axis}
\end{tikzpicture}
  \caption{Fair coverage control problem with $32$ robots and $4$ IDFs in a $1024\times 1024$ \SImsqr{} environment. The coverage cost increase in comparison to the clairvoyant is $81\%$, $246\%$, $250\%$, and $258\%$ for LPAC, Centralized, Decentralized, and SFCC \cite{malenciafair} respectively. 
    The LPAC policy outperforms the centralized CVT and SFCC approaches.\label{fig:fair_coverage}}
\end{figure}

\section{Experiments}%
\label{sc:experiments}%

This section provides an empirical study of the two problems \textit{fair coverage}~\eqref{eqn:epi_multi_objective_coverage} and \textit{constrained coverage}~\eqref{eqn:multi_objective_coverage} in the context of multi-objective coverage control.
We show that the proposed primal-dual algorithm with LPAC policy is:
\emph{(i)}~an efficient solver for the multi-objective coverage control problem,
\emph{(ii)}~generalizes to a wide range of robot parameters, and
\emph{(iii)}~scales with number of robots, number of constraints, and environment size.

We compare our method against four baseline controllers: socially-fair coverage control (SFCC)~\cite{malenciafair}, decentralized and centralized centroidal Voronoi tessellation (CVT) algorithms, and the clairvoyant CVT algorithm used for imitation learning.
SFCC \cite{malenciafair} is a centralized gradient-based algorithm that operates on the $\textbf{LogSumExp}$ of the IDFs.
The decentralized and centralized CVT algorithms operate on the partial information about the IDF gathered by the robots along their trajectories (see~\cite[Section V.A]{agarwal2024lpac}).

 Note that in our proposed approach, the LPAC policy has the same limited information as the decentralized CVT algorithm:
 \emph{(i)}~robots have a limited sensing radius to gather information about the IDF,
 \emph{(ii)}~robots can communicate with each other within a limited radius, and
 \emph{(iii)}~robots compute actions independently based on the information gathered from the environment and the communication with nearby robots.
 In contrast, the centralized CVT algorithm has information about the positions of all robots and their observations of the IDF and can compute actions for all robots simultaneously.
 The clairvoyant algorithm has perfect knowledge of the IDF and the positions of the robots resulting in near-optimal actions.

We begin by considering the fair coverage control problem \eqref{eqn:epi_multi_objective_coverage}.
\fgref{fig:fair_coverage} presents the maximum coverage cost across all IDFs averaged over $100$ randomly generated environments using the LPAC, centralized CVT, and clairvoyant controllers.
The LPAC outperforms both the decentralized and centralized CVT controller by $50\%$.
Moreover, the standard deviation bands of the LPAC controller are lower than the mean of the centralized CVT controller, indicating that the LPAC controller consistently performs better across different environments.
The result illustrates the benefits of using GNNs for multi-objective coverage control, even when accessing partially decentralized information. 

\begin{figure}[htbp]
  \centering
  \begin{tikzpicture}[trim axis right]
  \footnotesize
  \pgfplotstableread[col sep=comma]{./plots/data/comms_sensor_sizes.csv}{\ratiodata}
  \begin{axis}[
    width=1.05\columnwidth,
    height=0.7\columnwidth,
    xlabel={Time},
    x unit={s},
    ylabel={Coverage cost ratio ($\frac{\text{LPAC}}{\text{Centralized CVT}}$)},
    xlabel style={font=\footnotesize},
    ylabel style={font=\footnotesize, inner sep=-0.3pt},
    legend cell align={left},
    legend columns=3,
    legend style={
      font=\scriptsize,
      inner xsep=1pt,
      nodes={inner sep=1.5pt,text depth=0.2em},
      at={(1.0,1.0)},
      anchor=north east,
      draw=none,
    },
    scaled x ticks=false,
    xmin=-5, xmax=505,
    ymin=0.50, ymax=1.01,
    grid=both,
    xticklabel style={font=\scriptsize},
    yticklabel style={font=\scriptsize},
    ]
    \foreach \cr/\ltype in {192/solid, 256/loosely dashed, 320/loosely dotted} {
      \foreach \sr/\ctype in {64/mRed, 96/mBlue, 128/mGreen} {
        \edef\temp{\noexpand\addplot[smooth,thick,mark=none,\ctype,\ltype] table[x=Step, y=\cr_\sr, col sep=comma] {\noexpand\ratiodata};}
          \temp
          \edef\temp{\noexpand\addlegendentry{$(\cr,\sr)\ $}}
            \temp
          }
        }
      \end{axis}
    \end{tikzpicture}
  \caption{Generalization study for the fair coverage problem for varying communication radius (192, 256, 320)
    and sensor size (64, 96, 128), averaged over $100$ environments.
  The LPAC consistently outperforms the centralized CVT for all configurations.\label{fig:comm_radius}}
\end{figure}
\fgref{fig:comm_radius} shows the performance of the proposed LPAC approach with respect to the centralized CVT controller as the communication radius and the sensor size of the robots are varied.
The centralized algorithm is agnostic to the communication radius, and its performance increases significantly with sensor size.
While the LPAC approach is a little sensitive to these parameters, it consistently outperforms the centralized CVT for all configurations.

\begin{figure}[htbp]
  \centering
  \begin{tikzpicture}
  \clip (-0.45,-2.6) rectangle (0.95\columnwidth,0.255\columnwidth);
  \pgfplotstableread[col sep=comma]{./plots/data/env_72_fair_coverage.csv}{\casedata}
  \begin{groupplot}[
    group style={
      group size=3 by 2,
      x descriptions at=edge bottom,
      y descriptions at=edge left,
      horizontal sep=0.1cm,
      vertical sep=0.1cm,
    },
    tiny,
    grid=both,
    no markers,
    scaled x ticks=false,
    width=0.49\columnwidth,
    height=0.4\columnwidth,
    xmin=-5, xmax=505,
    xtick={0,200,400},
    ]
    \def\typestr{_obj_}
    \pgfplotsforeachungrouped \controller in {clair, LPAC, centralized} {
      \global\let\globalcontroller\controller
      \nextgroupplot[ymin=0.2, ymax=4.5]
      \foreach \iter/\ltype/\ctype in {3/solid/mRed, 2/loosely dashed/blue, 1/dotted/mGreen, 4/loosely dashdotted/mSteelGray} {
        \edef\temp{\noexpand\addplot[smooth,thick,\ctype,\ltype] table[x=Step, y=\globalcontroller\typestr\iter, col sep=comma] {\noexpand\casedata};}
        \temp
      }
    }
    \def\typestr{_lambda_}
    \pgfplotsforeachungrouped \controller in {clair, LPAC, centralized} {
      \global\let\globalcontroller\controller
      \nextgroupplot[ymin=0.1, ymax=0.5]
      \foreach \iter/\ltype/\ctype in {3/solid/mRed, 2/loosely dashed/blue, 1/dotted/mGreen, 4/loosely dashdotted/mSteelGray} {
        \edef\temp{\noexpand\addplot[smooth,thick,\ctype,\ltype] table[x=Step, y=\globalcontroller_lambda_\iter, col sep=comma] {\noexpand\casedata};}
        \temp
      }
    }
  \end{groupplot}
  \node[anchor=south] (cvt) at ($(group c1r1.north)-(0,0.05)$) {\scriptsize Clairvoyant CVT};
  \node[anchor=south] (cvt) at ($(group c2r1.north)-(0,0.05)$) {\scriptsize LPAC};
  \node[anchor=south] (cvt) at ($(group c3r1.north)-(0,0.05)$) {\scriptsize Centralized CVT};
  \node[anchor=north] (cvt) at ($(group c2r2.south)-(0,0.25)$) {\scriptsize Time $[s]$};
\end{tikzpicture}
  \caption{Case study: Correlation of maximum coverage cost (top row) and dual variables (bottom row) per IDF for the environment shown in \fgref{fig:main_env}. Each color represents a different IDF. The primal-dual algorithm performs dual updates such that the least covered IDF has more weight. The LPAC policy drives the robots efficiently to reduce the coverage cost.\label{fig:fair_objective_dual_variables}}
\end{figure}

\fgref{fig:fair_objective_dual_variables} shows the correlation of the maximum coverage cost and the dual variables for the environment in~\fgref{fig:main_env}.
The upper row shows how the performance across IDFs improves as time progresses. 
The clairvoyant and LPAC switch the largest dual variable from blue to red, at around $180 s$, thereby lowering the coverage cost for both IDFs, making the solution more equitable.
However, the centralized CVT struggles to lower the cost for the blue IDF.
The dual value is proportional to the difficulty of satisfying the constraint, which translates into minimizing the cost of the corresponding IDF.

\begin{table}[tbp]
  \renewcommand{\arraystretch}{1.2}
  \centering
  \caption{Fair Coverage: Varying Number of Robots and IDFs}%
  \label{tab:centralized_vs_lpac_fair_robot_ablation}%
  \begin{tabular}{cccccc}
    \toprule
    &\multicolumn{1}{c}{\multirow{2}{*}{}} &\multicolumn{2}{c}{\# Best Environments} & \multicolumn{2}{c}{Average Max Objective }\\\cmidrule(l){3-4} \cmidrule(l){5-6}
    &\multicolumn{1}{c}{} & \multicolumn{1}{r}{Centralized CVT} & \multicolumn{1}{r}{LPAC} & \multicolumn{1}{r}{Centralized  CVT} & \multicolumn{1}{r}{LPAC}\\\midrule
    \parbox[c]{2mm}{\multirow{8}{*}{\rotatebox[origin=c]{90}{\# Robots}}}& 8 & 32 & 68 & 2.839&2.763 \\
                                                                    & 16 & 15 & 85 & 2.680&2.191 \\
                                                                    & 24 & 4 & 96 & 2.559&1.687 \\
                                                                    & 32 & 6 & 94 & 2.251&1.386 \\
                                                                    & 40 & 2 & 98 & 2.167&1.191 \\
                                                                    & 48 & 6 & 94 & 1.842&1.163 \\
                                                                    & 56 & 3 & 97 & 1.744&0.981 \\
                                                                    & 64 & 3 & 97 & 1.608&0.983 \\
    \midrule
    \parbox[c]{2mm}{\multirow{7}{*}{\rotatebox[origin=c]{90}{\# IDFs}}}        &4 & 6 & 94 & 2.251&1.386 \\
            &5 & 6 & 94 & 3.137&1.979 \\
            &6 & 5 & 95 & 4.144&2.486 \\
            &7 & 4 & 96 & 5.072&3.214 \\
            &8 & 4 & 96 & 6.144&3.792 \\
            &9 & 1 & 99 & 7.060&4.667 \\
            &10 & 0 & 100 & 8.104&5.037 \\
    \bottomrule
  \end{tabular}
\end{table}

\begin{figure*}[htbp]
  \centering
  \subfloat[Varying number of robots]{%
    \begin{tikzpicture}[trim axis right]
  \tiny
  \pgfplotstableread[col sep=comma]{./plots/data/robot_ablation_ratio_vals.csv}{\ratiodata}
  \begin{axis}[
    width=0.33\textwidth,
    height=0.28\textwidth,
    scaled x ticks=false,
    xmin=-5, xmax=255,
    ymin=0.55, ymax=1.05,
    grid=both,
    scaled x ticks=false,
    colormap name=redblue,
    colorbar sampled,
    colorbar style={
      width=0.1cm,
      xshift=-0.3cm,
      ytick={8,16,24,32,40,48,56,64},
    },
    cycle list={[samples of colormap={8 of redblue}]},
    point meta min=8,
    point meta max=64,
    y tick label style={
      /pgf/number format/.cd,
      fixed,
      fixed zerofill,
      precision=2,
      /tikz/.cd
    },
    xtick={0,50,100,150,200,250},
    xlabel={Time},
    x unit={s},
    x label style={font=\scriptsize},
    ]
    \foreach \nconst in {8,16,24,32,40,48,56,64} {
      \addplot+[smooth,mark=none,solid] table[x=Step, y=\nconst, col sep=comma] {\ratiodata};
    }

  \end{axis}
\end{tikzpicture}
    \label{fig:fair_coverage_ablation}
-  }\hfill
  \subfloat[Varying number of IDFs]{%
    \begin{tikzpicture}[trim axis right]
  \tiny
  \pgfplotstableread[col sep=comma]{./plots/data/features_ratio_vals.csv}{\ratiodata}
  \begin{axis}[
    width=0.34\textwidth,
    height=0.28\textwidth,
    scaled x ticks=false,
    xmin=-5, xmax=255,
    ymin=0.55, ymax=1.05,
    grid=both,
    scaled x ticks=false,
    colormap name=redblue,
    colorbar sampled,
    colorbar style={
      width=0.1cm,
      xshift=-0.3cm,
      ytick={4,5,6,7,8,9,10},
    },
    cycle list={[samples of colormap={7 of redblue}]},
    point meta min=4,
    point meta max=10,
    y tick label style={
        /pgf/number format/.cd,
            fixed,
            fixed zerofill,
            precision=2,
        /tikz/.cd
    },
    xtick={0,50,100,150,200,250},
    xlabel={Time},
    x unit={s},
    x label style={font=\scriptsize},
    ]

    \foreach \nconst in {4,5,6,7,8,9,10} {
        \addplot+[smooth,mark=none,solid] table[x=Step, y=\nconst, col sep=comma] {\ratiodata};
    }
    
  \end{axis}
\end{tikzpicture}
    \label{fig:features_fair_coverage_ablation}
  }
  \hfill
  \subfloat[Varying environment length and robots]{%
    \begin{tikzpicture}
  \clip (-0.5,-0.5) rectangle (5.15, 3.6);   
  \tiny
  \pgfplotstableread[col sep=comma]{./plots/data/env_ablation.csv}{\ratiodata}
  \begin{axis}[
    width=0.33\textwidth,
    height=0.28\textwidth,
    scaled x ticks=false,
    xmin=-5, xmax=255,
    ymin=0.55, ymax=1.05,
    grid=both,
    scaled x ticks=false,
    colormap name=redblue,
    colorbar sampled,
    colorbar style={
      width=0.1cm,
      xshift=-0.3cm,
      ytick={1254,1448,1620,1774,1916,2048},
      yticklabels={$1254$ $(48)$,$1448$ $(64)$,$1620$ $(80)$,$1774$ $(96)$,$1916$ $(112)$,$2048$ $(128)$},
      yticklabel style={font=\tiny, text width=4em},
    },
    cycle list={[samples of colormap={6 of redblue}]},
    point meta min=1254,
    point meta max=2048,
    y tick label style={
      /pgf/number format/.cd,
      fixed,
      fixed zerofill,
      precision=2,
      /tikz/.cd
    },
    xtick={0,50,100,150,200,250},
    xlabel={Time},
    x unit={s},
    x label style={font=\scriptsize},
    ]

    \foreach \nconst in {1254,1448,1620,1774,1916,2048} {
      \addplot+[smooth,mark=none,solid] table[x=Step, y=\nconst, col sep=comma] {\ratiodata};
    }

  \end{axis}
\end{tikzpicture}
    \label{fig:env_size_ratio_lpac_centralied_all_robots}
  }
  \caption{Scalability study for the Fair Coverage problem: The plots show the coverage cost ratio of LPAC to Centralized CVT (C-CVT) on the $y$-axis.
    The simulations are executed for $500$ time steps of $0.5 s$ each and dual updates every $25$ steps.
    The same model is used for all experiments without further training.
    The LPAC consistently outperforms C-CVT across all experiments.
    While the performance reduces when the number of robots~$n$ is scaled down, the LPAC performs significantly better as $n$ scales up.
  The performance for varying numbers of IDFs and environment sizes remains relatively stable.}
  \label{fig:ablation_comparison}
\end{figure*}

We perform scalability studies for the fair coverage control problem for varying numbers of robots, number of IDFs, and environment size, as shown in \fgref{fig:ablation_comparison}.
The $y$-axis shows the ratio of the maximum coverage cost of LPAC to the centralized CVT controller averaged over $100$ environments of size $1024\times 1024$; the values are less than $1$ as LPAC outperforms the centralized CVT controller.
In \fgref{fig:fair_coverage_ablation}, while the LPAC does not perform as well when the number of robots~$n$ is scaled down from $32$, the performance improves significantly as $n$ scales up.
Additionally, \tbref{tab:centralized_vs_lpac_fair_robot_ablation} shows that the LPAC outperforms the centralized CVT controller for more than $90\%$ of the environments.
\fgref{fig:features_fair_coverage_ablation} and \tbref{tab:centralized_vs_lpac_fair_robot_ablation} show the stability of LPAC as the number of IDFs is scaled up.
 The LPAC outperforms centralized CVT by at least $35\%$ in all cases.
 
\fgref{fig:env_size_ratio_lpac_centralied_all_robots} shows the stability in performance as the size of the environment and number of robots are jointly scaled up.
Once again, LPAC outperforms the centralized CVT algorithm by at least $35\%$, thus showing that LPAC scales in the coverage control problem size. 

\textbf{Constrained Coverage:}
We consider the constrained coverage control problem \eqref{eqn:multi_objective_coverage} and compare the centralized CVT against the proposed primal-dual algorithm with LPAC policy.
To the best of our knowledge, there does not exist another algorithm to compare against for the constrained multi-objective coverage control problem.
We considered the feasibility problem with $M=4$ IDFs, i.e. $\phi_0=0$, and sampled the constraint thresholds~$\alpha_m$ from a normal distribution $\alpha_m\sim \mc N(\mu,0.1)$, for $\mu\in\{0.3,\dots,0.9\}$.

\begin{figure}[tbp]
  \begin{tikzpicture}
  \small
  \footnotesize
  \pgfplotstableread[col sep=comma]{./plots/data/constrained_data.csv}{\infdata}
  \begin{axis}[
    width=1.05\columnwidth,
    height=0.7\columnwidth,
    xlabel={Constraint Levels ($\mu$)},
    ylabel={Infeasibility (\%)},
    xlabel style={font=\footnotesize},
    ylabel style={font=\footnotesize, inner sep=0.0pt},
		legend style={font=\scriptsize, at={(0.735,1.00)},anchor=north},
    legend cell align={left},
    scaled x ticks=false,
    xmin=0.28, xmax=0.92,
    ymin=-2, ymax=102,
    grid=both,
    xticklabel style={font=\scriptsize},
    yticklabel style={font=\scriptsize},
    x tick label style={
        /pgf/number format/.cd,
            fixed,
            fixed zerofill,
            precision=2,
        /tikz/.cd
    }
    ]

    \addplot[mark=*, mark options={solid,fill=mOrange}, draw=mOrange, thick, dashed]
    table[x=mus, y=idfs_cen, col sep=comma] {\infdata};
    \addlegendentry{Centralized CVT (constraints)}

    \addplot[mark=square*, mark options={solid,fill=mDarkRed}, draw=mDarkRed, thick, dashed]
    table[x=mus, y=prob_cen, col sep=comma] {\infdata};
    \addlegendentry{Centralized CVT (problems)}

    \addplot[mark=*, mark options={solid, fill=mBlue}, draw=mBlue, thick]  
    table[x=mus, y=idfs_lpac, col sep=comma] {\infdata};
    \addlegendentry{LPAC (constraints)}

    \addplot[mark=square*, mark options={solid,fill=mGreen}, draw=mGreen, thick]  
    table[x=mus, y=prob_lpac, col sep=comma] {\infdata};
    \addlegendentry{LPAC (problems)}

  \end{axis}
\end{tikzpicture}
  \caption{Constrained coverage control with varying constraint levels ($\mu$):
  The LPAC (solid) outperforms the centralized CVT (dashed) in terms of the percentage of infeasible IDFs (circles) and problems (squares).\label{fig:constraint}}
\end{figure}

\fgref{fig:constraint} shows the percentage of infeasible constraints (or equivalently IDFs) and infeasible problems (at least one of the four IDFs is infeasible).
Except for $\mu=0.3$ with $45\%$, the percentage of infeasible constraints in the LPAC approach is always less than half that of the centralized CVT algorithm.
Furthermore, averaged across all $\mu$, LPAC is infeasible for $26\%$ less problems.


The extensive simulation results establish that the proposed primal-dual algorithm with LPAC policy performs significantly better baseline algorithms for both fair coverage and constrained coverage problems.

\section{Conclusions}%
\label{sc:conclusions}%
In this paper, we reformulate the multi-objective coverage control with constraints as a coverage control problem with a single objective. Our method relies on utilizing a dynamically re-weighted combination of maps using a primal-dual algorithm. We showcased the benefits of our method by utilizing a decentralized learning technique based on Graph Neural Networks that abstracts the perceptions and learns the communications and actions to be taken by each robot based solely on local perception and communications.
Empirically, we implemented a wide variety of experiments showcasing that our method is scalable---both in the number of importance fields, as well as in the number of robots in the swarm---and improves upon both centralized and decentralized CVT algorithms by an average of $30\%$.
Future work involves real-world experiments in decentralized settings.

\newpage
\bibliographystyle{IEEEtran}
\bibliography{bib}  





\end{document}